\def\BibTeX{{\rm B\kern-.05em{\sc i\kern-.025em b}\kern-.08em
    T\kern-.1667em\lower.7ex\hbox{E}\kern-.125emX}}
\newtheorem{theorem}{Theorem}
\newtheorem{lemma}{Lemma}
\newtheorem{remark}{Remark}
\newtheorem{standing}{Standing Assumption}
\newcommand{\mc}{\mathcal}
\newcommand{\R}{\mathbb{R}}
\newcommand{\N}{\mathbb{N}}
\newcolumntype{R}[2]{%
	>{\adjustbox{angle=#1,lap=\width-(#2)}\bgroup}%
	l%
	<{\egroup}%
}
\newcommand{\nb}[2]{
		\fbox{\bfseries\sffamily\scriptsize#1}
		{\sf\small$\blacktriangleright$\textit{\textcolor{red}{#2}}$\blacktriangleleft$}
	}
\newcommand{\nb}[2]{}
\newcommand{\lambdamin}{\lambda_{\mathrm{min}}}
\newcommand{\lambdamax}{\lambda_{\mathrm{max}}}
\begin{document}

\title{Counter-example guided inductive synthesis of control Lyapunov functions for uncertain systems}
\author{Daniele Masti, Filippo Fabiani, Giorgio Gnecco, and Alberto Bemporad
\thanks{
The authors are with the IMT School for Advanced Studies Lucca, Piazza San Francesco 19, 55100, Lucca, Italy 
({\tt \footnotesize \{daniele.masti, filippo.fabiani, giorgio.gnecco, alberto.bemporad\}@imtlucca.it}).
% and with CINI Cybersecurity Laboratory, Roma, Italy. 
% email:~
The first author was supported in part by Consorzio Interuniversitario Nazionale per l'Informatica (CINI) through  Research Project under Grant CA 01/2021 a.i. 2. The first and third authors are members of INdAM-GNAMPA.
}
}

% \markboth{Journal of \LaTeX\ Class Files,~Vol.~18, No.~9, September~2020}%
% {How to Use the IEEEtran \LaTeX \ Templates}

\maketitle         
\thispagestyle{empty}
\pagestyle{empty}    

% \authornote{A preliminary version of this paper was presented at Seventh Nordic Workshop on System and Network Optimization for Wireless (SNOW'16). Authors have been partially supported by Consorzio Interuniversitario Nazionale per l'Informatica (CINI)  through  Research Project under Grant CA 01/2021 a.i. 2.}
% \markboth{Cybersecurity considerations for Communication Based Train Control}{Soderi \textit{et al.}}

\begin{abstract}
We propose a counter-example guided inductive synthesis (CEGIS) scheme for the design of control Lyapunov functions and associated state-feedback controllers for linear systems affected by parametric uncertainty with arbitrary shape. 
% At each iteration of the scheme, t
In the CEGIS framework, a learner iteratively proposes a candidate control Lyapunov function and a tailored controller by solving a linear matrix inequality (LMI) feasibility problem, while a verifier either falsifies the current candidate by producing a counter-example to be considered at the next iteration, or it certifies that the tentative control Lyapunov function actually enjoys such feature. We investigate the Lipschitz continuity of the objective function of the global optimization problem solved by the verifier, which is key to establish the convergence of our method in a finite number of iterations. Numerical simulations %finally
confirm the effectiveness of the proposed approach.  
\end{abstract}

\begin{IEEEkeywords}
Uncertain systems, Computer-aided control design, Lyapunov methods.
\end{IEEEkeywords}

%\IEEEspecialpapernotice{(Invited Paper)}

\maketitle

\section{Introduction}

% The task of synthesizing a control algorithm for a given dynamical system is a central topic in control theory. 
%This is especially true when the systems one aims at stabilizing are subject to disturbances or only uncertainly known. 
The design of effective control laws guaranteeing closed-loop stability and performance is the quintessential problem in control theory.
% Among the many approaches to do so,  Lyapunov methods~\cite{blanchini2015set} have lately found a renewed interest in the community due to their flexibility and power. 
% Among available approaches
Among the avenues to do so, Lyapunov methods~\cite{blanchini2015set} have traditionally attracted interest in the systems-and-control community in view of their flexibility and power. Indeed, several approaches have been developed to synthesize different forms of Lyapunov functions (e.g., quadratic/quartic, piecewise affine/quadratic) for various types of systems, a task that is often accomplished {through convex optimization~\cite{biswas2005survey, kothare1996robust,parrilo2000structured,WuPrajna2004})}, although other approaches have been explored based on, e.g., neural networks~\cite{gabyetal2022} or formal verification~\cite{munser2021synthesis}. 
 % Common quadratic/quartic, piecewise affine/quadratic, neural networks are just a few among the many types of functions that have been used. To synthesize Lyapunov functions, such templates have been traditionally used with convex optimization approaches~\cite{biswas2005survey, kothare1996robust, parrilo2000structured}, but other ideas have also been explored~\cite{munser2021synthesis}. 
% Recently, Authors have started to explore the idea of using counter-example guided sampling-based approaches~\cite{ravanbakhsh2019learning,mcgough2010symbolic} to synthesize Lyapunov functions, possibly exploiting the presence of already existing baseline controllers~\cite{khansari2017learning, mordatch2014combining}.
A recent trend, instead, looks at the design of (control) Lyapunov functions using counter-example guided sampling-based approaches~\cite{ravanbakhsh2019learning,mcgough2010symbolic}, to possibly take advantage of the presence of baseline controllers already in place~\cite{khansari2017learning, mordatch2014combining}.
Among these, the \textit{counter-example guided inductive synthesis} (CEGIS) family of techniques has found widespread use and success {in %the 
recent years~\cite{solar2006combinatorial,abate2017automated,berger2022learning,chang2019neural,dai2020counter,Abate2023Neural,AhmedPeruffoAbate2020}}. %{\color{red} Citare \cite{WuPrajna2004} da qualche parte del paper, insieme ai vari metodi alternativi suggeriti dal primo reviewer.}
The main idea behind CEGIS consists of designing a function $f: \mathbb{R}^n \to \mathbb{R}$ by exploiting the information contained in a growing dataset 
of \mbox{(counter-)}examples (hereinafter called $\mc C_i$ at step $i$), while making sure that $f$ %also
belongs to some %prescribed 
given
set of functions $\Theta$. 
To this end, CEGIS relies on an iterative adversarial procedure %that involves 
with two actors, a \textit{learner} and a \textit{verifier}.
% , and the inclusion $f\in\Theta$ is certified \textit{after} the learning procedure.
In summary, at each iteration $i\in\mathbb{N}^+$: 
\begin{enumerate}
    \item The learner takes the dataset $\mc C_i$ as input and synthesizes a function $f_i$ from a hypothesis space $\mathcal{H}$, according to some criterion, or establishes that such a function can not be synthesized; 
    \item The selected function $f_i$ is then passed to the verifier, which verifies if  $f_i \in \Theta \coloneqq\{f: \mathbb{R}^n \to \mathbb{R} \mid g(f(z)) \le 0, \text{ for all } z \in \mc Z\}$, for some $g : \mathbb{R} \to \mathbb{R}$ 
    and set $\mc Z \subseteq \mathbb{R}^n$ over which the variable $z$ takes values. The verifier can hence take two different conclusions: 
    \begin{enumerate}
    \item It finds a \textit{counter-example} $z_{i+1} \in \mc Z$ so that $g(f_i(z_{i+1}))>0$. In such a case     
            $
            \mc C_{i+1} \leftarrow \mc C_i \cup \{z_{i+1}\},
            $
            and a new iteration step follows;

    \item It certifies that no counter-example exists, yielding the positive conclusion of the procedure.

    \end{enumerate}
\end{enumerate}

%One of the most central component of CEGIS-like scheme is the verifier. Many works resorted on very general yet computationally expensive approaches such as SMT approaches~\cite{barrett2018satisfiability,abate2017automated,abate2020formal}, Mixed Integer Programming~\cite{dai2020counter,fabiani2021reliably} or SDP relaxations~\cite{ravanbakhsh2017learning,ravanbakhsh2019learning} to accomplish this taks.  Yet, these methods, may not adapt well to larger scale problems.
%This may be problematic if one is interested in systems affected by uncertainty or stochasticity since the verification process may start involving a large number of variables.

While CEGIS schemes have been successfully employed to design disparate types of (control) Lyapunov functions for different classes of systems \cite{solar2006combinatorial,abate2017automated,berger2022learning,chang2019neural,dai2020counter,Abate2023Neural},  little attention has been given to how such methods can be used to extend the capabilities of traditional robust control approaches, both in terms of enlarging their applicability and, more prominently, of reducing their computational~requirements.

In this paper we make an attempt to fill this gap. %We propose a CEGIS-based method aimed at simplifying the synthesis of control Lyapunov functions, and associated feedback controllers, for linear systems whose parameters are only known to belong to an arbitrary compact set.
Specifically, we start from the classical machinery proposed in \cite{kothare1996robust}, %tailored to stabilize systems subject to polytopic parametric uncertainties,
and leverage a novel Lipschitz-based verification approach, rooted in perturbation of linear operators~\cite{kato2013perturbation}, to efficiently design control Lyapunov functions, and associated feedback controllers, for linear systems whose parameters are only known to belong to an arbitrary compact set. %Moreover, %in all those cases in which the uncertainty set is convex,
This is achieved without introducing %any kind of   introducing 
{conservativeness due to the adoption of a convex polytopic outer-approximation of %to set overapproximation, due
%either
%to the outer approximation of
the %(possibly) %convex %non-polytopic 
uncertainty set. Indeed, due to computational reasons, standard methods use it to approximate the convex hull of such a set (which may have a complex shape), possibly based on a few vertices}. %or the convexification in the nonconvex case. % like often done by other authors~\cite{alessio2007squaring,bernardini2011stabilizing}.
%The synthesis problem is solved using well known  we exploit the well known Linear Matrix Inequality (LMI) machinery to surely avoid candidates failing on known counter-examples. 
%extend the domain of application of the machinery used to stabilize systems subjected to polytopic uncertainties to the case of general connected uncertainty set 
%do not introduce any kind of conservativeness 
%(like in~\cite{alessio2007squaring,bernardini2011stabilizing})
%and 
% In doing so, we also introduce an approach which may make possible stabilizing systems for which it would be otherwise required to solve practically intractable optimization problems if stabilized using classical approaches.
In summary, we make the following contributions:
\begin{itemize}
    \item We design a novel CEGIS-based method, in which the learner solves an LMI feasibility problem, while the verifier a Lipschitz continuous nonconvex one;
    \item We get Lipschitz continuity of the objective function of the verifier's task, which is key to resort on efficient solvers tailored for global Lipschitz optimization;
    \item We prove that our algorithm converges in finite-time.  
\end{itemize}

% The paper is organized as follows: i
In \S \ref{sec:background} we formalize the control problem and recall fundamentals of Lyapunov stability, while in \S \ref{sec:cegis} we describe the learner' and verifier's tasks, characterize the technical properties of the problems they are asked to solve, and establish the finite-time convergence of our CEGIS-based method. Computational aspects are, instead, analyzed in \S \ref{sec:comp_aspect}, 
% {\color{blue} along with potential limitations of the proposed method, which are further characterized in \S \ref{sec:numerical} through numerical examples.}
while numerical examples are finally given in \S \ref{sec:numerical}.
% We finally draw some conclusions and propose%some
% future outlooks in Section~\ref{sec:conclusions}

\section{Problem definition and preliminaries}
\label{sec:background}
% \subsection{Problem definition}
% In this paper, we discuss how a CEGIS approach can be used to learn control Lyapunov functions for uncertain linear systems, also accompanied with linear state-feedback controllers. 
% In particular, we focus on discrete-time dynamical systems with state-space evolution described as follows: 
% In the work, we aim at designing quadratic common 
% control Lyapunov functions and related linear state-feedback controllers for %discrete-time
% uncertain linear systems of the form:
In this paper we aim at designing control Lyapunov functions and related state-feedback controllers for discrete-time
uncertain linear systems of the form:
\begin{align}
\Sigma : 
x(k+1)=A(k) x(k) +B(k) u(k),
\label{eq:Sigma}
\end{align}
where $x \in \mathbb{R}^n$ and $u \in \mathbb{R}^m$ denote the state and control vectors, respectively,
and $x$ is assumed fully available.
The dynamical matrices are so that $(A(k),B(k)) \in \Omega \subseteq \mc A \times \mc B$, for all $k \ge 0$, where $\mc A \subseteq \R^{n\times n}$ and $\mc B \subseteq \R^{n\times m}$. %It is worth remarking that systems of the form (\ref{eq:Sigma}) are able to represent also nonlinear discrete time-varying systems of the form $x(k + 1) = f (x(k), u(k), k)$, in cases in which the Jacobian $[\partial f/\partial x \,\, \partial f/\partial u]$ is known to belong to the set $\Omega$ for every pair $(x(k), u(k))$ \cite{kothare1996robust}.
Next, we introduce the main assumption adopted throughout:

\begin{standing}\label{standing:ass}
    % The following conditions hold:
    % \begin{enumerate}
    %     \item[i)] $\mc A$ and $\mc B$ are compact sets;

    %     \item[ii)] For each $(A(k),B(k)) \in \Omega$, $k\in \N$, $\Sigma$ is controllable. 
    % \end{enumerate}
    %$\mc A$ and $\mc B$ are compact sets.
    $\Omega$ is a nonempty compact set.
    \hfill$\square$
\end{standing}

As made evident later in the paper, a distinct feature of our approach will be the possibility to efficiently handle uncertain systems as in \eqref{eq:Sigma} i) without necessarily imposing any { convex polytopic %or convex
structure} to characterize the uncertainty,
and ii) sampling a (possibly large) number of matrices in $\mc A$ and $\mc B$.
Even though our CEGIS scheme will require the availability of $\Omega$ to solve the verifier task, we remark here that %it is non-conservative as
it makes %neither inner nor outer approximations of such a set. 
no (possibly conservative) outer-approximation of such a set. 
% { Moreover, its convexity assumption is not restrictive since $\Omega$ may arise from the convexification of a non-convex set, without introducing conservativeness}.

%\subsubsection{Structure of this work}

%In this note, we explore an approach based on Common Lyapunov Functions to synthesize an AG to empower data-driven controllers with robust stability properties.

% \subsection{Notions of Lyapunov stability}\label{subsec:Lyap_stab}
% In the following, we recall some basic facts on Lyapunov stability for discrete-time dynamical systems.
We now recall some key notions on Lyapunov stability for discrete-time linear uncertain systems as $\Sigma$ in \eqref{eq:Sigma}. 
% Specifically, we start with the autonomous version of $\Sigma$ in \eqref{eq:Sigma} obtained by setting $B(k)=0$, for all $k$.
% , for all $j=1,\ldots,M$. 
% In addition, 
In particular, such a system is said to be \emph{stable} in some set $\mc X \subseteq \R^n$, with $0 \in \mc X$, if there exist a function $V : \mathbb{R}^n \to \mathbb{R}$ and some control law $u:\R^n\to\R^m$ so that \cite[Ch.~2.3]{blanchini2015set}:
% Let us then characterize the resulting dynamics through a state-successor pair $(x, x^+)$ so that $x^+ = A x$, for some $A\in\Omega=\mc A$.
% \ggnote{ma la matrice non stava nel conv di quelle? Scritto così non  sembra comparire un conv}. 
% Such system is \emph{stable} in $\mc X \subseteq \mathbb{R}^n$, with $0 \in \mc X$, if there exists $V : \mathbb{R}^n \to \mathbb{R}$ so that \cite[Ch.~2.3]{blanchini2015set}:
\begin{equation}
\begin{aligned}
V(x)> 0, & ~\forall x \in \mc X \setminus \{0\},\\
V(0) = 0 & \iff x = 0, \\
V(x^+)-V(x)\leq 0, & \; \forall (x, x^+),
\end{aligned}
\label{eq:LyapunovCondition}
\end{equation}
%\ggnote{Secondo me bisogna chiarire se vogliamo cercare una Lyapunov function che garantisca simple stability oppure asymptotic stability..la definizione di sopra (con V quadratica strettamente convessa) garantisce solo simple stability, mentre la risoluzione del sistema (3), nel caso politopico convesso, garantisce invece asymptotic stability..ho inserito alcuni commenti nel sorgente relativamente a come si potrebbe formulare la cosa in modo uniforme nel testo, a seconda che ci interessi alla fine garantire simple stability o asymptotic stability}
where $(x, x^+)$ denotes a generic state-successor pair, i.e., $x^+ = A x + B  u(x)$, for any possible $(A,B) \in \Omega$. %\abnote{[Chiarire meglio: qui sembra la formulazione sia per sistemi LTI. Se $V$ \`e control-Lyapunov per ogni sistema $(A,B)$ LTI in $\Omega$, allora lo \`e anche per il sistema LTV $A(k),B(k)$ variabili nel tempo, in maniera noto o incerta, in $\Omega$]}.
In this case, $V(\cdot)$ represents a (common) \emph{control Lyapunov function} with associated state-feedback controller $u(\cdot)$, and ensures closed-loop stability in $\mc X$
% , both for each Linear Time-Invariant (LTI) dynamical system with a given matrix pair $(A,B) \in \Omega$, and for the Linear Time-Varying (LTV) dynamical system 
to the trajectories originating from $\Sigma$ in \eqref{eq:Sigma} with matrices
% time-varying matrix pair 
$(A(k),B(k))$ varying in $\Omega$. %Note that, in case
When the third condition holds as strict inequality, one may additionally recover convergence to the origin of the %underlying
closed-loop trajectories.
% , where its variation in time can be either known or uncertain.   
% When a control action is also considered, if for all $x\in\mc X$ there exists some feedback control law $\bar u:\R^n\to\R^m$ so that conditions \eqref{eq:LyapunovCondition} are met, then $V$ is called \emph{control Lyapunov function} 
% \ggnote{la scelta del controllo qui non può dipendere da $A_k$ e $B_k$ (che magari non si sa nemmeno quali siano), vero?} 
% and ensures closed-loop stability in $\mc X$. With a slight abuse of notation, the successor state in this case is the one produced by a state-input pair $(x,\bar u(x))$.

To preserve mathematical tractability while designing functions and controllers so that conditions \eqref{eq:LyapunovCondition} are met, a common choice is to rely on a quadratic form for $V(\cdot)$, i.e.,
% which turns \eqref{eq:LyapunovCondition} conditions for \emph{quadratic stability}, i.e.,
$
V(x)=x^\top Px,
$
with $P \in \mathbb{S}^{n}_{\succ 0}$, and a linear one for the controller, $u(x)=Kx$, for some gain matrix $K\in\R^{m\times n}$. Such a linear-quadratic parametrization for $V(\cdot)$ and $u(\cdot)$ is the one that will also be employed in the remainder. 
% is so that $P=P^\top \succ 0$.

We finally recall some facts regarding the well-known convex polytopic case, namely when $\Omega =\textrm{conv}(\{(A_1,B_1), \ldots, (A_M,B_M)\})$ with $M$ given generator pairs of  matrices, that will be instrumental for the learner task.
% \ggnote{notazione un po' ambigua. Non si capisce dal papero se $\Omega$ è  un conv (quindi un insieme infinito) oppure un insieme finito (come sembra essere a p. 1}.
In that case, a control Lyapunov function with associated robust control law for $\Sigma$ can be designed by solving an LMI feasibility problem \cite[Ch.~7.3.3.2]{duan2013LMIs}:
% \\\ffnote{Mi sa che c'era un problema di segno sull'elemento $(1,2)$}
\begin{equation}\label{eq:baseLineK}
\left\{
\begin{aligned}
%\begin{array}{rcl}
&\underset{X,W}{\mathrm{min}} && 0\\
& ~\mathrm{ s.t.} &&
\begin{bmatrix}
X& X A^\top_j +W^\top B_j^\top\\ \star & X
\end{bmatrix} \succ 0,~j = 1,\ldots,M.
% &~~X  \succcurlyeq \varepsilon I, \notag
%\end{array}
\end{aligned}
\right.
\end{equation}

Given a solution pair $(X^\star, W^\star)$ to \eqref{eq:baseLineK} one hence obtains:
\begin{itemize}
\item A linear controller robustly stabilizing $\Sigma$ to the origin:
% i.e.,
\begin{equation}
u(x)=Kx=W^\star(X^\star)^{-1}x;
\label{eq:staticFeedback}
\end{equation}
\item A control Lyapunov function $V(x)=x^\top Px$, $P=(X^\star)^{-1}$, for the system \eqref{eq:Sigma} with linear controller \eqref{eq:staticFeedback}.
% , i.e., $\{(A_1,B_1), \ldots, (A_M,B_M)\}$. 
% \ggnote{Cioè, è una common Lyapunov function per i sistemi autonomi che si ottengono inserendo l'espressione 
% (\ref{eq:staticFeedback}) del controllo dentro l'equazione di stato?} \DM{yep}{}
\end{itemize}

\section{Control Lyapunov functions design through counter-examples}
\label{sec:cegis}
%We now detail both the learner' and verifier's tasks and characterize the technical properties of the problems they are asked to solve, which are instrumental to finally establish the finite-time convergence of the overall procedure.

\subsection{The learner task}\label{subsec:learner}

The considerations above regarding \eqref{eq:baseLineK} directly apply to the CEGIS-based framework we are about to introduce. In fact, let $\mc C_i$ be the set of \textit{counter-examples} (or simply \textit{samples}) of dynamical matrices $(\hat A_j,\hat B_j)$ considered at the $i$-th iteration, $j=1,\ldots,i$. We rewrite the LMI constraints in \eqref{eq:baseLineK} to account for the elements of $\mc C_i$ as follows:
% \subsection{From polytopic uncertainty to CEGIS}
% \begin{subequations}
% \label{eq:baseLineK}
\begin{equation}\label{eq:LMICon2}
\left\{
\begin{aligned}
%\begin{array}{rcl}
&\underset{X,W}{\mathrm{min}}&&0\\
&~\mathrm{ s.t. }&&
\begin{bmatrix}
X&X \hat A_h^\top+ W^\top \hat B_h^\top\\ \star & X
\end{bmatrix} \succcurlyeq \varepsilon I,~h=1,\ldots,M_i, \\
&&& X \preccurlyeq \eta I,~ W \in \mc W,
% \begin{bmatrix}
% X&X \hat A_h^\top+ W^\top \hat B_h^\top & 0\\ \star & X & 0\\0&0&-X
% \end{bmatrix} \succcurlyeq 
% \begin{bmatrix}
% \varepsilon I_{2n} & 0\\ 0 & -\eta I_{n}
% \end{bmatrix},\\
% &&&\hspace{4.2cm} \text{for all }h=1,\ldots,M_i, \\
% &  \forall A_l \in C \\
% \label{eq:learnerFindK}
% &~~X \succeq  \varepsilon I, \notag
%\end{array}
\end{aligned}
% \end{subequations}
\right.
\end{equation}

% \ggnote{Secondo me la differenza fondamentale tra il problema~\eqref{eq:baseLineK} e il problema~\eqref{eq:LMICon2} è che nel primo caso si considerano tutti i vertici del politopo, mentre nel secondo caso si considerano solo gli ex-controesempi trovati nelle iterazioni precedenti.}\DM{è l'intero senso del paper, come spiegato nella prima sezione}

%\ggnote{non è che $X \succeq  \varepsilon I$ sia già implicato dalla~\eqref{eq:LMICon2}?} 

%\ggnote{Potrebbe essere utile chiarire (magari riportando un riferimento) che risolvendo la LMI~\eqref{eq:baseLineK} si fa in modo che i contro esempi non siano più tali (se effettivamente è così, come mi sembra di capire). Inoltre, nel formulare la LMI~\eqref{eq:baseLineK}, non è che anziché considerare tutti gli $A_l \in C$ sia sufficiente considerare tutti gli $A_l$ che sono vertici di ${\rm conv}(C)$?}\ffnote{Sono d'accordo sulla seconda parte -- ho riscritto LMI sopra in questo senso.}
\noindent where, in this case, the index $h$ allows us to enumerate the vertex matrices of $\mc C_i$, with $M_i \coloneqq |\textrm{vert}(\textrm{conv}(\mc C_i))|$, where clearly $M_i\le i$ and, for any $h=1,\ldots,M_i$, $(\hat A_h, \hat B_h) \in \textrm{vert}(\textrm{conv}(\mc C_i)) \subseteq \Omega$. The hyperparameters $\eta \ge \varepsilon>0$ can be chosen arbitrarily and are meant to upper (respectively, lower) bound the largest (smallest) eigenvalue of matrix $X$. Besides imposing some sort of contraction through strict positive-definiteness with $\varepsilon$, we will see that 
 $\eta$ and $\varepsilon$ are hyperparameters of our CEGIS method that strongly affect its performance. 
Finally, we assume $\mc W \subset \R^{m\times n}$ to be a nonempty compact set for technical reasons clarified later, whereas we also assume the convexity of $\mc W$ in order to keep the problem \eqref{eq:LMICon2} convex.
% strictly convex and hence added to single out a unique solution from \eqref{eq:LMICon2}.
% \begin{remark}
% For the first step of the CEGIS algorithm, the $C$ set can be initialized with any valid realization of the matrix $A$.
% \hfill$\square$
% \end{remark}

At each iteration $i$ of our CEGIS scheme, the learner accomplishes two tasks: first, given the current set $\mc C_i$ of samples of matrix pairs, it identifies those counter-examples that are vertices of its convex hull, for a total of $M_i$ vertices; leveraging this information, it then solves the resulting convex problem \eqref{eq:LMICon2} with LMI constraints. 

Note that the same discussion made in \S \ref{sec:background} also applies here. Specifically, solving \eqref{eq:LMICon2} produces a linear gain $K_i$ as in \eqref{eq:staticFeedback} that makes $V_i=x^\top (X^\star)^{-1} x$ a control Lyapunov function for the surrogate convex polytopic system identified by generator matrices $(\hat A_h,\hat B_h) \in \textrm{vert}(\textrm{conv}(\mc C_i))$.
On the other hand, since \eqref{eq:LMICon2} heavily depends on the choice of hyperparameters $\eta \ge \varepsilon>0$, in case the synthesis process fails, this implies that the parametrization adopted does not allow to design a control Lyapunov function with associated controller for $\Sigma$, with the assigned $(\eta, \varepsilon)$. We will discuss possible solutions to this issue later in \S \ref{subsec:cegis}.

% \begin{remark}
% In case the synthesis process fails, this obviously implies that a common Lyapunov function for $\Omega$ does not exists in $\mathcal{H}$. \ggnote{non è che invece $\varepsilon$ è stato scelto troppo grande, oppure che magari potrebbe anche esistere una common Lyapunov function, ma non quadratica?} \DM{Non mi pare di aver scritto il contrario}{}
% \hfill$\square$
% \end{remark}

% \begin{remark}
% The same reasoning discussed in the previous Section can be also applied here to see that solving Problem \eqref{eq:LMICon2} indeed generates a feedback policy and a Lyapunov function valid for $ \forall~A_h \in \textrm{vert}(\textrm{conv}(C))$. \ggnote{perché invece non è più valida per gli altri elementi di $\textrm{conv}(C)$?}\DM{L'uno implica l'altro}{}
% \end{remark}

\subsection{The verifier task}\label{subsec:verifier}
% We now detail the construction of the verification procedure, which has as its primary goal to certify that a given candidate function is indeed a (control) Lyapunov function for $\Sigma$, namely $\Theta= \{f:\mathbb{R}^n\to\mathbb{R} \mid f \text{ is a common Lyapunov function for }\Sigma\}$.
The verifier aims at certifying that a given candidate function and associated controller, designed by exploiting samples in $\mc C_i$, define a control Lyapunov function %with tailored control action
for $\Sigma$ with $(A(k),B(k)) \in \Omega$.
% , for any $\Sigma$ with $(A,B) \in \Omega$. %\abnote{[for any $\Sigma$ with $A(k),B(k)\in\Omega$ ?]}
% \ggnote{questo punto non mi è molto chiaro..forse è over the whole of $\Omega$, però bisogna rivedere un attimo com'è scritta la definizione}.
% To do so, we first consider the autonomous case only, i.e, by setting $\mc B = \emptyset$:
% for all $j=1,\ldots,M$ \ggnote{controllare poi alla fine quando $M$ ci vuole e quando no..forse supponiamo che $\Omega$ (l'insiemone i cui vertici esploriamo solo in parte durante l'applicazione delle varie LMI) abbia $M$ vertici?}. 
% the extension to the controlled case with linear state-feedback, indeed, follows immediately.
% by repeating the same discussion we are about to introduce focusing on the set of closed-loop matrices originating from all pair matrices $(A,B) \in \Omega$ and some linear gain $K$. 
Thus, by specializing the third condition in \eqref{eq:LyapunovCondition} to a quadratic function and linear controller, we get: %obtain:
\begin{align}\label{eq:quadratic}
&V(x^+)- V(x)= (x^+)^\top P x^+ - x^\top Px \notag \\
=& ~ x^\top ((A^\textrm{cl})^\top P A^\textrm{cl}-P)x \leq 0,~\forall x \in \mc X,~ (A,B) \in \Omega,
\end{align} 
where $A^\textrm{cl}\coloneqq A+BK$. In particular, for a given $P$ and gain matrix $K$, condition~\eqref{eq:quadratic} is verified for all $x \in \mc X$ whenever $(A^\textrm{cl})^\top P A^\textrm{cl}-P$ is negative semi-definite for all $(A,B) \in \Omega$. 
Thus, fixing $(A,B) \in \Omega$ and applying the Schur complement to \eqref{eq:quadratic} yields the 
 %following
symmetric matrix
% \ggnote{attenzione che non è ancora una LMI, manca il secondo membro}:
%\begin{equation}\label{eq:Schur}
$\Xi(A^\textrm{cl})\coloneqq\left[\begin{smallmatrix}
P & (A^\textrm{cl})^\top P \\
\star &P
\end{smallmatrix}\right]$,
%\end{equation}
% \DM{Nonostante i miei tentantivi,per ora non sono riuscito a mettere nessun bound sull'autovalore minimo di $\Xi_P$, fatto salvo che per l'interlacing theorem è bounded da sopra dall'autovalore più piccolo di P}{}
 whose eigenvalues are therefore all real. %Moreover, we note that the elements of $\Xi(A^\textrm{cl})$ depend % (at most) \ggnote{perché "at most"? forse meglio "affinely"? sempre che sia giusto come termine?} 
%affinely on the elements of $(A,B)$. 
% This fact will be useful later.
% \begin{remark}
% As $\Xi(\cdot)$ is symmetric, its eigenvalues are all real.
% \hfill$\square$
% \end{remark}
Hence, % in \eqref{eq:Schur},
checking whether a given matrix $P=P^\top\succ0$ produces a control Lyapunov function for $\Sigma$ is %thus
equivalent to checking non-negativity of the optimal value $\lambda^\star$ of the following optimization problem:
% \ggnote{oppure weakly positive?}:
%the global minimizer(s) $m^*$ of
\begin{align}\label{eq:smallestEig}
\lambda^\star=\underset{(A,B) \in \Omega}{ \mathrm{min}} & \;  \lambda_{\mathrm{min}}(\Xi(A^\textrm{cl})),
% & \mathrm{vec}(A) \in \Omega \notag
\end{align}
where $\lambdamin(\Xi(\cdot))$ denotes the smallest eigenvalue of $\Xi(\cdot)$.
% , %, is smaller than zero.
% and here allows us to identify the set as $\Theta = \{P \in \mathbb{S}^n_{\succ0} \mid \lambda_{\mathrm{min}}(\Xi(A))\ge0,~\forall A\in\Omega\}$.

Therefore, at the $i$-th iteration of the procedure, the verifier takes the matrices $P_i = (X^\star)^{-1}$ and $K_i=W^\star(X^\star)^{-1}$ proposed by the learner solving \eqref{eq:LMICon2}, defines $A^\textrm{cl}_i \coloneqq A+BK_i$, and
% $
% \Xi(A)\coloneqq\begin{bmatrix}
% P_i & A^\top P_i \\
% \star &P_i
% \end{bmatrix},
% $
$
\Xi(A^\textrm{cl}_i) =
\left[
\begin{smallmatrix}
    P_i & (A^\textrm{cl}_i)^\top P_i\\
    \star & P_i
    \label{eq:closedLoopCondition}
\end{smallmatrix}\right]
$
 as in the definition of $\Xi(A^\textrm{cl})$, where the subscript $i$ makes explicit the dependence on the learner tentative solution, and finally  solves \eqref{eq:smallestEig}. If $\lambda^\star <0$, then a minimizer pair $(A^\star, B^\star) = (\hat A_{i+1}, \hat B_{i+1})$ is taken as a counter-example. Otherwise, no counter-example exists. 

% For the controlled case, instead, the verifier shall also accounts for the linear gain $K_i=W^\star(X^\star)^{-1}$ proposed by the learner in order to define the following augmented matrix involving $A^\textrm{cl}_i \coloneqq A+BK_i$ for some pair $(A,B)$:
% $
% \Xi(A^\textrm{cl}_i) =
% \begin{bmatrix}
%     P_i & (A^\textrm{cl}_i)^\top P_i\\
%     \star & P_i
%     \label{eq:closedLoopCondition}
% \end{bmatrix}.
% $
% This latter hence requires to solve \eqref{eq:smallestEig} over all pairs $(A,B) \in \Omega$. If some counter-example exists so that $\lambda^\star <0$, in this case it will be of the type $(A^\star, B^\star) = (\hat A_{i+1}, \hat B_{i+1})$.
% \ggnote{Magari qui, anche a costo di essere un po' pedanti, si potrebbe accennare a che cosa si fa quando c'è anche la matrice $B$}
% \ffnote{$\Sigma$ è il sistema che stiamo considerando, forse si vuole dire che $\Omega$ è composto da matrici simmetriche?}
% \DM{sbagliato $\Sigma$ con $\Xi_P$. Decisamente una pessima scelta di nomi comunque}{}

% In both cases
Note that, however, \eqref{eq:smallestEig} amounts to solving a generally non-convex optimization problem %{ (specifically, a concave minimization problem)}
that requires the availability of the set $\Omega$. Moreover, a global minimum must be attained to prove that no counter-example exists. On the other hand, it also features several properties that make finding a globally optimal solution an affordable task. To see this, we need the following result borrowed from %traditional
matrix perturbation theory:

\begin{lemma}[\hspace{-.01cm}\cite{hornjohnson1991}]%\cite{helmke1995eigenvalue}]
Consider two symmetric matrices $K$ and $L$ of the same dimension. Then, it holds that
$
%\begin{equation}\label{eq:eig_lip}
| \lambdamin(K)-\lambdamin(K+L)| \leq || L||_{\mathrm{op}}$,
%\end{equation}
where $|| \cdot||_{\mathrm{op}}$ is the operator norm of a matrix,  induced by the $l_2$-norm.
\hfill$\square$
\label{lemma:liplamdamin}
\end{lemma}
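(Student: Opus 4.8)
The plan is to prove this as a standard consequence of the variational (Courant--Fischer) characterization of the smallest eigenvalue of a symmetric matrix. Since $K$ and $K+L$ are symmetric, both have real eigenvalues and the Rayleigh--Ritz formula applies, giving
\[
\lambdamin(K)=\min_{\|v\|_2=1} v^\top K v,
\qquad
\lambdamin(K+L)=\min_{\|v\|_2=1} v^\top (K+L) v.
\]
First I would fix a unit vector $v$ and observe that $v^\top (K+L) v = v^\top K v + v^\top L v$, so that by the definition of the minimum over the unit sphere one has $\lambdamin(K+L)\le v^\top K v + v^\top L v$ for every unit $v$.

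Next I would choose $v$ to be a unit eigenvector attaining $\lambdamin(K)$, i.e. $v^\top K v=\lambdamin(K)$; substituting gives $\lambdamin(K+L)\le \lambdamin(K)+v^\top L v$. Bounding the quadratic form by the operator norm, $|v^\top L v|\le \|L\|_{\mathrm{op}}$ for unit $v$, yields $\lambdamin(K+L)-\lambdamin(K)\le \|L\|_{\mathrm{op}}$. To obtain the reverse inequality I would simply swap the roles of $K$ and $K+L$: writing $K=(K+L)+(-L)$ and noting $\|-L\|_{\mathrm{op}}=\|L\|_{\mathrm{op}}$, the same argument gives $\lambdamin(K)-\lambdamin(K+L)\le \|L\|_{\mathrm{op}}$. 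Combining the two one-sided bounds produces $|\lambdamin(K)-\lambdamin(K+L)|\le \|L\|_{\mathrm{op}}$, which is the claim.

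The argument is essentially routine, so there is no serious obstacle; the only point that deserves care is the justification that $|v^\top L v|\le \|L\|_{\mathrm{op}}$ for every unit vector $v$. I would derive this from the Cauchy--Schwarz inequality, $|v^\top L v|\le \|v\|_2\,\|Lv\|_2\le \|L\|_{\mathrm{op}}\|v\|_2^2=\|L\|_{\mathrm{op}}$, using that $\|\cdot\|_{\mathrm{op}}$ is the $l_2$-induced norm. Since the result is quoted from \cite{hornjohnson1991}, an acceptable alternative would be to cite it directly rather than reproduce the short proof; but giving the two-line symmetric argument above makes the statement self-contained.
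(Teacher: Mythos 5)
Your proof is correct: the two-sided Rayleigh--Ritz argument with the bound $|v^\top L v|\le \|L\|_{\mathrm{op}}$ is exactly the standard derivation of the Weyl-type perturbation bound that the paper invokes. The paper itself gives no proof, merely citing \cite{hornjohnson1991} and noting the result follows from Weyl's inequalities, so your argument is consistent with (and makes self-contained) the route the authors had in mind.
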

In other words, the minimum eigenvalue $\lambdamin(\cdot)$ of a symmetric matrix is a Lipschitz continuous map of the elements of the matrix. This is a consequence of Weyl's inequalities \cite{hornjohnson1991}. %, stating that the eigenvalues of  symmetric matrix are Lipschitz continuous with respect to their entries. %, and the pointwise operator $\mathrm{min}$ preserves Lipschitz continuity.

It is recalled here that any linear operator between two finite-dimensional normed spaces is bounded (this applies, in particular, to the operator $\Xi(\cdot)$, which maps some $A^\textrm{cl}$ to $\Xi(A^\textrm{cl})$). Since bounded linear operators preserve Lipschitz continuity, we are then able to prove the following result: 
% \ggnote{attenzione che nell'enunciato del prossimo risultato forse bisognerebbe ricordare di nuovo che stiamo considerando il caso senza $B$ (il problema è che se uno guarda la formulazione iniziale del problema di sintesi, lì c'è anche $B$)..meglio scrivere nell'enunciato che per il momento togliamo $B$, e poi semmai lo reintroduciamo in un remark successivo)}:
\begin{theorem}
Let $P = (X^\star)^{-1}$ and $K=W^\star(X^\star)^{-1}$ with $(X^\star,W^\star)$ be a solution to \eqref{eq:LMICon2} with a given $\eta\ge\varepsilon>0$. Then, there exists some constant $\ell = \ell(\varepsilon) \ge 0$ such that  
$$
| \lambdamin(\Xi(A^\mathrm{cl}))-\lambdamin(\Xi(A^\mathrm{cl}+\Delta A^\mathrm{cl}))| \leq \ell || \Delta A^\mathrm{cl}||_{\mathrm{op}},
$$
with $A^\mathrm{cl}=A+BK$ for $(A,B)\in\Omega$ and perturbation $\Delta A^\mathrm{cl}$ induced by some pair $(\Delta A, \Delta B)$.
\hfill$\square$
\label{theo:LipSigma}
\end{theorem}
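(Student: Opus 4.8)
The plan is to reduce the claim to a single application of Lemma~\ref{lemma:liplamdamin}. First I would set $K \coloneqq \Xi(A^\mathrm{cl})$ and $K+L \coloneqq \Xi(A^\mathrm{cl}+\Delta A^\mathrm{cl})$, so that the perturbation matrix is $L = \Xi(A^\mathrm{cl}+\Delta A^\mathrm{cl}) - \Xi(A^\mathrm{cl})$. Since both $\Xi(A^\mathrm{cl})$ and $\Xi(A^\mathrm{cl}+\Delta A^\mathrm{cl})$ are symmetric by construction, so is $L$, and Lemma~\ref{lemma:liplamdamin} immediately gives
\[
|\lambdamin(\Xi(A^\mathrm{cl})) - \lambdamin(\Xi(A^\mathrm{cl}+\Delta A^\mathrm{cl}))| \le \|L\|_{\mathrm{op}}.
\]
It therefore remains only to bound $\|L\|_{\mathrm{op}}$ by a constant multiple of $\|\Delta A^\mathrm{cl}\|_{\mathrm{op}}$, with a constant depending solely on $\varepsilon$.

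Next I would exploit the affine structure of the map $A^\mathrm{cl} \mapsto \Xi(A^\mathrm{cl})$. Since the diagonal blocks of $\Xi(\cdot)$ equal the constant matrix $P$ and only the off-diagonal blocks depend, linearly, on $A^\mathrm{cl}$, the constant part cancels in the difference, leaving the anti-diagonal block matrix $L = \left[\begin{smallmatrix} 0 & (\Delta A^\mathrm{cl})^\top P \\ P\,\Delta A^\mathrm{cl} & 0 \end{smallmatrix}\right]$. For a symmetric matrix of this form the eigenvalues are $\pm$ the singular values of $P\,\Delta A^\mathrm{cl}$, whence $\|L\|_{\mathrm{op}} = \|P\,\Delta A^\mathrm{cl}\|_{\mathrm{op}} \le \|P\|_{\mathrm{op}}\,\|\Delta A^\mathrm{cl}\|_{\mathrm{op}}$ by submultiplicativity of the operator norm. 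This is precisely the assertion that the bounded linear part of $\Xi(\cdot)$ preserves Lipschitz continuity.

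The crucial step, and the one that pins down the dependence of $\ell$ on $\varepsilon$, is to bound $\|P\|_{\mathrm{op}}$. Here I would use that $P = (X^\star)^{-1}$ with $X^\star$ feasible for \eqref{eq:LMICon2}: the requirement that the full block matrix be $\succcurlyeq \varepsilon I$ forces its diagonal block to satisfy $X^\star \succcurlyeq \varepsilon I$, because a principal submatrix of a matrix bounded below by $\varepsilon I$ inherits the same bound (test $z^\top(\cdot)z \ge \varepsilon\|z\|^2$ on vectors supported on the first block). Hence $\lambdamin(X^\star) \ge \varepsilon$, and since $P$ is symmetric positive definite, $\|P\|_{\mathrm{op}} = \lambdamax((X^\star)^{-1}) = 1/\lambdamin(X^\star) \le 1/\varepsilon$. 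Chaining the three estimates yields the claim with the explicit Lipschitz constant $\ell(\varepsilon) = 1/\varepsilon$.

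I do not anticipate a serious obstacle: the argument is a short chain of inequalities built on the preceding lemma. The only point demanding care is the last one, namely extracting a \emph{uniform} bound on $\|P\|_{\mathrm{op}}$ from the learner's feasibility problem, as it is exactly here that the hyperparameter $\varepsilon$ enters. This also explains the qualitative message of the statement, that the Lipschitz constant is governed by $\varepsilon$ rather than by the data $(A,B)$ or by the perturbation itself.
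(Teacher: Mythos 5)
Your proposal is correct and follows essentially the same route as the paper's proof: apply Lemma~\ref{lemma:liplamdamin} to the symmetric perturbation $L=\Xi(A^\mathrm{cl}+\Delta A^\mathrm{cl})-\Xi(A^\mathrm{cl})$, observe that the constant diagonal blocks cancel so only the anti-diagonal blocks $(\Delta A^\mathrm{cl})^\top P$ survive, and bound $\|P\|_{\mathrm{op}}\le\varepsilon^{-1}$ via the constraint $X^\star\succcurlyeq\varepsilon I$ to obtain $\ell(\varepsilon)=\varepsilon^{-1}$. The only (harmless) difference is cosmetic: you evaluate $\|L\|_{\mathrm{op}}$ via the singular values of the anti-diagonal block, whereas the paper computes the same quantity by maximizing over unit vectors; you also spell out the principal-submatrix argument for $X^\star\succcurlyeq\varepsilon I$, which the paper leaves implicit.
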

\begin{proof}
 %In view of the relation in \eqref{eq:eig_lip}
 %, we have:
 One gets:
$$|\lambdamin(\Xi(A^\textrm{cl}))-\lambdamin(\Xi(A^\textrm{cl}+ \Delta A^\textrm{cl}))| \leq \left \|\begin{bmatrix}
0 & (\Delta A^\textrm{cl})^\top P \\
\star &0
\end{bmatrix} \right\|_{\rm op}$$
where %Without loss of generality, focusing on the
$\|\cdot \|_{\rm op}$-norm is the matrix norm induced by the $l_2$ one. This leads to:
$$
\begin{aligned}
&\left\|\begin{bmatrix}
0 & (\Delta A^\textrm{cl})^\top P \\
\star &0
\end{bmatrix} \right\|_{\rm op} =\underset{\|x\|_2=1}{\textrm{sup}} \left\|\begin{bmatrix}
0 & (\Delta A^\textrm{cl})^\top P& \\
\star &0
\end{bmatrix} \begin{bmatrix}
x_1 \\
x_2
\end{bmatrix}\right\|_2 & \\
&=\underset{\|x\|_2=1}{\textrm{sup}}\left\|\begin{bmatrix}
(\Delta A^\textrm{cl})^\top P x_2 \\
P^\top \Delta A^\textrm{cl} x_1
\end{bmatrix}\right\|_2 \\
&= \underset{\|x\|_2=1}{\textrm{sup}} \sqrt{\|(\Delta A^\textrm{cl})^\top P\|_{\rm op}^2 \|x_2\|_2^2 + \|P^\top \Delta A^\textrm{cl}\|_{\rm op}^2 \|x_1\|_2^2} & \\
&=\|(\Delta A^\textrm{cl})^\top P\|_{\rm op} \leq \|\Delta A^\textrm{cl}\|_{\rm op} \|P\|_{\rm op} \leq \varepsilon^{-1}\|\Delta A^\textrm{cl}\|_{\rm op},
\end{aligned}
$$
which follows from the fact that %where we have %first 
%used %the fact that
%$\|(\Delta A^\textrm{cl})^\top P\|_{\rm op}=\|P^\top \Delta A^\textrm{cl}\|_{\rm op}$ to resolve the sup operator and, successively, of 
%$\|P\|_{\rm op}=\lambda_{\mathrm{max}}(P) \leq \varepsilon^{-1}$, since
$P=P^\top\succ0$ and  
$P=(X^\star)^{-1}$, for %any
$X^\star$ solving~\eqref{eq:LMICon2}. Thus, setting $\ell = \ell(\varepsilon) \coloneqq \varepsilon^{-1}$ concludes the proof.
\end{proof}

\vspace{-0.2cm}

{
\begin{remark}
 The bound in Theorem \ref{theo:LipSigma} can be expressed also in terms of a Lipschitz constant w.r.t. the pair $(A,B)$, since $\|\Delta A^{\rm cl}\|_{\rm op} \leq \max\{\|I\|_{\rm op},\|K\|_{\rm op}\} \|\Delta A \,\,\Delta B\|_{\rm op}$ as $\Delta A^{\rm cl}=\Delta A + \Delta B K$. Thus, such a Lipschitz constant is bounded from above by $\varepsilon^{-1} \max\{1,\|K\|_{\rm op}\}$. A uniform upper bound, i.e., independent from $K$, could be also obtained by accounting for $K=W^\star (X^\star)^{-1}$, $(X^\star)^{-1}=P$, and $W^\star \in \mc W$, which is compact. In case $B$ is not uncertain (i.e., $\Delta B=0$), the upper bound on the Lipschitz constant w.r.t. $A$ reduces to $\varepsilon^{-1}$. 
 \hfill$\square$
\end{remark}
}
We %finally
note that Lemma~\ref{lemma:liplamdamin} and Theorem~\ref{theo:LipSigma} %together
suggest that global optimization techniques, tailored for Lipschitz optimization, can be used to verify if a candidate function $V_i$ is a Lyapunov function for $\Sigma$. We detail this crucial point
in the next section.

\vspace{-0.1cm}

\subsection{The proposed CEGIS scheme}\label{subsec:cegis}

Algorithm~\ref{algo:OverallCegis} summarizes the main steps of the proposed CEGIS-based iterative scheme for the design of a control Lyapunov function, accompanied with a suitable linear state-feedback controller, for the linear uncertain system $\Sigma$ in \eqref{eq:Sigma}. In it, black-filled bullets refer to tasks performed by the learner, while the empty bullet to the one performed by the verifier. 
% \ggnote{per come è scritto, l'algoritmo sembra mostrare che, nella formulazione originaria del CEGIS riportata all'inizio, è meglio non mettere un $N$ fissato, in quanto l'algoritmo potrebbe non terminare mai}. 

%\DM{Questo è super sketched}

\vspace{-0.2cm}

 \begin{algorithm}[h!t]
	\caption{CEGIS-Lyapunov learning method}\label{algo:OverallCegis}
	\DontPrintSemicolon
	\SetArgSty{}
	\SetKwFor{ForAll}{for all}{do}{end forall}
	% \smallskip
	\textbf{Initialization:} Set $\eta \ge \varepsilon>0$, fill $\mc C_1 = \{(\hat A_1, \hat B_1)\}$ with some $(\hat A_1, \hat B_1)\in\Omega$. \\
	% \smallskip
	\textbf{Iteration $(i \in \N^+)$:} 
 % \ggnote{meglio $\N^+$, così non c'è ambiguità e togliamo lo zero}:}
        \begin{itemize}\setlength{\itemindent}{-.2cm}
            \item[$\bullet$] Identify $\textrm{vert}(\textrm{conv}(\mc C_i))$
        \end{itemize}
        \begin{itemize}\setlength{\itemindent}{-.2cm}
            % \smallskip
            \item[$\bullet$] Solve \eqref{eq:LMICon2}, set $P_i=(X^\star)^{-1}$, $K_i=W^\star(X^\star)^{-1}$
            
            % \smallskip
            \textbf{If} \eqref{eq:LMICon2} \texttt{infeasible} \textbf{then} \texttt{exit}
        \end{itemize}
        \begin{itemize}\setlength{\itemindent}{-.2cm}
            \item[$\circ$] Solve \eqref{eq:smallestEig} using a Lipschitz global optimization tool
            
            % \smallskip
            \textbf{If} $\lambda^\star<0$ \textbf{then} $\mc C_{i+1}\leftarrow \mc C_i \cup \{(A^\star,B^\star)\}$, \texttt{repeat}\\
            \textbf{If} $\lambda^\star\ge0$ \textbf{then} $V=x^\top P_i x$, $K=K_i$, \texttt{exit}
        \end{itemize}
\end{algorithm}

\vspace{-0.2cm}

Note that Algorithm~\ref{algo:OverallCegis} may terminate in two different points with two possible outcomes. According to the discussion in \S \ref{subsec:learner}, the learner has to solve an LMI feasibility problem, which is not only susceptible to the parametrization adopted and the quality of the samples the verifier produces, but also on the choice of the hyperparameters $\eta \ge \varepsilon>0$. For these reasons, \eqref{eq:LMICon2} may %thus
not be feasible, and this coincides with the notion of infeasibility mentioned in the next Theorem~\ref{th:convergence}, namely that a quadratic control Lyapunov function and associated linear controller do not exist for the uncertain system $\Sigma$ in \eqref{eq:Sigma} { with} the assigned $\eta$ and $\varepsilon$. 
{
 Note that, however, this does not necessarily imply that such a pair does not exist for the underlying system, for other choices of the hyperparameters. In fact,} one might want to reinitialize the procedure with, e.g., a smaller %value for
$\varepsilon$, a larger %one for
$\eta$, or a larger $\mc W$. { Moreover, in view of the finite-time convergence guarantees featured by Algorithm~\ref{algo:OverallCegis} (detailed in the following Theorem~\ref{th:convergence}), the choice of $(\hat A_1, \hat B_1) \in \Omega$ may affect the number of steps needed, and also the specific control Lyapunov function generated, though not the actual outcome of Algorithm~\ref{algo:OverallCegis} (infeasibility or convergence are, indeed, independent from $(\hat A_1, \hat B_1)$).
% (in case the algorithm terminates by finding such a function).
}

Finally, in case the verifier
% global optimization solver, which in view of the properties of $\lambdamin(\cdot)$ can coincide with any suitable solver for Lipschitz optimization, 
simply terminates without founding any counter-example, Algorithm~\ref{algo:OverallCegis} has converged. Such a convergence property is itself a certification of the correctness of the control Lyapunov function and associated linear controller found. Otherwise, if some counter-example has been found, the set $\mc C_{i+1}$ includes the new pair $(A^\star,B^\star)=(\hat A_{i+1}, \hat B_{i+1})$ obtained from \eqref{eq:smallestEig} and the iterative procedure continues, although not indefinitely, as shown next: 
\begin{theorem}\label{th:convergence}
    Let $\eta \ge \varepsilon>0${, and some $(\hat A_1, \hat B_1)\in\Omega$} be given. Then, in a finite number of iterations Algorithm~\ref{algo:OverallCegis} either declares infeasibility
    % (i.e., that a quadratic common control Lyapunov function and associated linear controller do not exist for the given linear uncertain system $\Sigma$ in \eqref{eq:Sigma} and the given $\eta$ and $\varepsilon$), %\abnote{[be more specific of what infeasibility means, e.g., that a quadratic control Lyapunov function and associated linear controller do not exist for the given uncertain set of linear systems and given hyperparameteds $\eta$, $\varepsilon$?]}, 
    or produces a pair $(\bar P, \bar K)$ so that $V=x^\top \bar P x$ is a control Lyapunov function for the linear uncertain system $\Sigma$ in \eqref{eq:Sigma} with linear controller $u=\bar Kx$.
    \hfill$\square$
\end{theorem}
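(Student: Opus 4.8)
The plan is to establish finite termination through a packing argument in the compact set $\Omega$, and then to read the two declared outcomes directly off the two exit points of Algorithm~\ref{algo:OverallCegis}. The engine is the following separation property: whenever an iteration fails to terminate, the fresh counter-example returned by the verifier is bounded away from \emph{all} previously collected samples by a distance that does not depend on the iteration index; since a compact set is totally bounded, it cannot host an infinite, uniformly separated sequence, so the loop must stop after finitely many steps.

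First I would quantify the margin enjoyed by a successful learner's candidate on its current sample set. Because the off-diagonal block $X\hat A_h^\top+W^\top\hat B_h^\top$ of the LMI \eqref{eq:LMICon2} is affine in $(\hat A_h,\hat B_h)$ for fixed $(X,W)$, and the feasible set of an LMI is convex, constraint satisfaction at the vertices $\mathrm{vert}(\mathrm{conv}(\mc C_i))$ propagates to the entire hull, hence to every $(A,B)\in\mc C_i$. I would then relate the LMI matrix to the verifier's matrix $\Xi$ via the congruence $\Xi(A^{\mathrm{cl}})=\mathrm{diag}(P,P)^\top\left[\begin{smallmatrix}X & X(A^{\mathrm{cl}})^\top\\ \star & X\end{smallmatrix}\right]\mathrm{diag}(P,P)$ with $P=(X^\star)^{-1}$ and $A^{\mathrm{cl}}=A+BK_i$, so the margin $\varepsilon I$ in \eqref{eq:LMICon2} transfers to $\Xi(A^{\mathrm{cl}})\succcurlyeq\varepsilon\,\mathrm{diag}(P^2,P^2)$. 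Invoking the upper bound $X^\star\preccurlyeq\eta I$, i.e. $\lambdamin(P)\ge\eta^{-1}$, this produces the iteration-independent lower bound $\lambdamin(\Xi(A+BK_i))\ge\varepsilon/\eta^2$ for every sample $(A,B)\in\mc C_i$.

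Second I would pin down a single Lipschitz constant valid at every iteration. By Theorem~\ref{theo:LipSigma} the map $A^{\mathrm{cl}}\mapsto\lambdamin(\Xi(A^{\mathrm{cl}}))$ is Lipschitz with constant $\varepsilon^{-1}$, and by the subsequent Remark the composed map $(A,B)\mapsto\lambdamin(\Xi(A+BK_i))$ is Lipschitz with constant $\varepsilon^{-1}\max\{1,\|K_i\|_{\mathrm{op}}\}$; since $K_i=W^\star(X^\star)^{-1}$ with $W^\star\in\mc W$ compact and $\|(X^\star)^{-1}\|_{\mathrm{op}}\le\varepsilon^{-1}$, the gain norm is bounded uniformly in $i$, giving a constant $L$ independent of the iteration. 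The separation estimate then follows: at a non-terminating iteration the new counter-example satisfies $\lambdamin(\Xi(\hat A_{i+1}+\hat B_{i+1}K_i))=\lambda^\star<0$, whereas every earlier sample $(\hat A_k,\hat B_k)\in\mc C_i$ obeys $\lambdamin(\Xi(\hat A_k+\hat B_k K_i))\ge\varepsilon/\eta^2$ by the previous paragraph; comparing these two evaluations of the \emph{same} candidate through $L$ forces $\|(\hat A_k,\hat B_k)-(\hat A_{i+1},\hat B_{i+1})\|_{\mathrm{op}}>\delta$ for all $k\le i$, where $\delta\coloneqq\varepsilon/(\eta^2 L)>0$. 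Hence the entire sequence of counter-examples is pairwise $\delta$-separated.

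To conclude I would use the Standing Assumption: $\Omega$ compact is totally bounded, so it is covered by finitely many balls of radius $\delta/2$, each containing at most one point of a $\delta$-separated set; the counter-example sequence is therefore finite and the loop exits after finitely many iterations. If it exits on infeasibility of \eqref{eq:LMICon2}, the first alternative of the statement holds. If it exits with $\lambda^\star\ge0$, then, because the verifier attains the \emph{global} minimum of \eqref{eq:smallestEig}, one has $(A^{\mathrm{cl}})^\top\bar P A^{\mathrm{cl}}-\bar P\preccurlyeq0$ for all $(A,B)\in\Omega$, which together with $\bar P=(X^\star)^{-1}\succ0$ verifies all conditions in \eqref{eq:LyapunovCondition}, establishing that $V=x^\top\bar P x$ is a control Lyapunov function with controller $u=\bar Kx$. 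The main obstacle I anticipate is securing the uniformity of $\delta$ across iterations: both the accepted-sample margin $\varepsilon/\eta^2$ and the Lipschitz constant $L$ must be shown independent of $i$, which is precisely where the upper bound $\eta$ on $X^\star$ and the compactness of $\mc W$ enter; a further delicate point is that the Lipschitz comparison is legitimate only because it is made between two evaluations of one fixed candidate $\Xi$ (with $P=P_i$, $K=K_i$), never across the changing candidates.
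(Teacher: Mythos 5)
Your proposal is correct and follows the same skeleton as the paper's proof: part one (the uniform margin $\lambdamin(\Xi(A+BK_i))\ge\varepsilon/\eta^2$ on $\mathrm{conv}(\mc C_i)$, obtained by the congruence with $\mathrm{diag}(P,P)$ and the bounds $X^\star\succcurlyeq\varepsilon I$, $X^\star\preccurlyeq\eta I$) is essentially identical to the paper's step i), and the conclusion that counter-examples must lie outside $\mathrm{conv}(\mc C_i)$ is the same. Where you diverge is in how termination is extracted from this margin. The paper argues by contradiction: it assumes an infinite sequence of counter-examples, extracts convergent subsequences of both $\{(\hat A_i,\hat B_i)\}$ and $\{(P_i,K_i)\}$ using compactness of $\Omega$ and of the learner's feasible set, introduces a ``limiting safe set'' $\mc S_{\bar A^{\mathrm{cl}}}$ built from the limit gain $\bar K$, and concludes that eventually consecutive counter-examples would violate the separation implied by the safe sets. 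You instead make the separation quantitative and uniform up front -- $\delta=\varepsilon/(\eta^2 L)$ with $L$ bounded independently of $i$ via $\|W^\star\|$ bounded on the compact $\mc W$ and $\|(X^\star)^{-1}\|_{\mathrm{op}}\le\varepsilon^{-1}$ -- and then invoke total boundedness of $\Omega$ to cap the number of pairwise $\delta$-separated points. This buys two things: you avoid the paper's double subsequence extraction and the somewhat delicate comparison across the changing candidates $(P_i,K_i)$ near the limit (the paper's argument implicitly needs the safe-set radius to survive the passage to the limit gain $\bar K$, which your uniform $L$ handles explicitly), and your version yields in principle an explicit bound on the number of iterations via a covering number of $\Omega$ at scale $\delta/2$, which the paper does not provide. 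Your closing observation -- that the Lipschitz comparison is only ever made between two evaluations of the \emph{same} candidate $(P_i,K_i)$ -- is exactly the point the paper's limiting argument has to work around, and your route sidesteps it cleanly.
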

\begin{proof}
    We split the proof in two parts. First, we show that the verifier can only generate counter-examples through \eqref{eq:smallestEig} %strictly
    outside the set of matrices considered by the learner to %compute
    find a solution to \eqref{eq:LMICon2}. This fact %This is a necessary condition, 
    is then exploited %in the second part of the proof
    to prove that the verifier can not produce counter-examples indefinitely, %and
    therefore Algorithm~\ref{algo:OverallCegis} either returns infeasibility, or a solution pair $(\bar P, \bar K)$ in a finite number of steps.
    
    i) 
    % At the generic iteration index $i \in \N^+$, we note that any solution $(X^\star,W^\star)$ to the learner problem \eqref{eq:LMICon2} would still be a feasible if $(\hat A_{i+1}, \hat B_{i+1}) \in \textrm{conv}(\mc C_i)$, where $(\hat A_{i+1}, \hat B_{i+1})$ is the counter-example produced by the verifier. Therefore, $(\hat A_{i+1}, \hat B_{i+1}) \notin \textrm{conv}(\mc C_i)$ represents a necessary condition to claim that Algorithm~\ref{algo:OverallCegis} can return infeasibility in a finite number of steps. We will show next that such a condition actually holds true.
    Assume that \eqref{eq:LMICon2} is feasible at the generic $i$-th iteration, yielding a solution pair $(X^\star,W^\star)$. Then, by pre and post-multiplying by $\mathrm{diag}((X^\star)^{-1}, (X^\star)^{-1})$ both sides of the first 
    % $2n\times2n$ diagonal block of the 
    set of LMI constraints in \eqref{eq:LMICon2}, which is always possible in view of $X^\star \succcurlyeq \varepsilon I$ with $\varepsilon>0$, we obtain:
    $$
    \begin{aligned}
    &\begin{bmatrix}
        (X^\star)^{-1}&(\hat A_h+ \hat B_h W^\star (X^\star)^{-1})^\top(X^\star)^{-1} \\
        \star & (X^\star)^{-1}
    \end{bmatrix} \\
    &\hspace{1.8cm}
    \succcurlyeq  \varepsilon
    \mathrm{diag}((X^\star)^{-1}, (X^\star)^{-1})^{2} %\\
    %&%\hspace{4.2cm}
    \succcurlyeq  \varepsilon \lambdamin^2((X^\star)^{-1}) I \\
    &\hspace{1.8cm}
    \succcurlyeq  (\varepsilon/\lambdamax^2(X^\star)) I \succcurlyeq  (\varepsilon/\eta^2) I,
    \end{aligned}
    $$
    for all $h=1,\ldots,M_i$.
    % with $\lambdamax(X^\star) > 0$, since $X^\star \succcurlyeq \varepsilon I$ 
    This follows from %happens by construction of
    \eqref{eq:LMICon2}, since $\lambdamax(X^\star) \le \eta$ in view of $X^\star \preccurlyeq \eta I$,
    % the second
    % $n\times n$ diagonal block
    % LMI constraint, 
    and $\lambdamin(X^\star)\ge\varepsilon$, %thus
    yielding $\varepsilon/\eta^2 > 0$ as $\eta\ge\varepsilon>0$. %Then,
    In view of %relations
    $P_i=(X^\star)^{-1}$ and $K_i=W^\star (X^\star)^{-1}$, from the previous inequalities we have:
    \begin{equation}\label{eq:lower_bound}
    \Xi(A^{\textrm{cl}}_{i,h}) \succcurlyeq  (\varepsilon/\eta^2) I, \text{ for all } h=1,\ldots,M_i,
    \end{equation}
    with $A^{\textrm{cl}}_{i,h} = \hat A_h + \hat B_h K_i$, which implies $\lambdamin(\Xi(A^{\textrm{cl}}_{i,h})) \ge \varepsilon/\eta^2 > 0$ for all $h=1,\ldots,M_i$. However, since each pair $(\hat A_h, \hat B_h) \in \textrm{vert}(\textrm{conv}(\mc C_i))$, standard arguments in robust control of polytopic systems \cite[Chapter~7.3.3.2]{duan2013LMIs} imply that this %latter 
    relation
    not only holds %true
    for all vertex matrices, but also for any matrix in their convex hull, %thus
    yielding $\lambdamin(\Xi(A^{\textrm{cl}}_i)) \ge \varepsilon/\eta^2 > 0$ being $A^{\textrm{cl}}_i$ now constructed by exploiting any pair of matrices $(A,B) \in \textrm{conv}(\{(\hat A_h,\hat B_h)\}_{h=1}^{M_i})=\textrm{conv}(\mc C_i)$. %This %implicitly
    %means that
    Hence, the verifier can only find counter-examples through \eqref{eq:smallestEig} strictly outside $\textrm{conv}(\{(\hat A_h,\hat B_h)\}_{h=1}^{M_i})$, i.e., $(\hat A_{i+1}, \hat B_{i+1}) \in \Omega \setminus \textrm{conv}(\mc C_i)$. 
    % Therefore we have that $\mc C_{i+1} \supset \mc C_i$ and, in view of the properties of the convex hull \cite{kiselman2002semigroup}, also that $\textrm{conv}(\mc C_{i+1}) \supset \textrm{conv}(\mc C_i)$, which means that the learner has to consider a larger pool of samples at every iteration while solving \eqref{eq:LMICon2}. 
    % Then, either it produces a new tentative solution pair $(P_{i+1},K_{i+1})$, or Algorithm~\ref{algo:OverallCegis} terminates at the second step.
    % , which can hence only happen in a finite number of iterations.
    
    ii) Armed with the previous result, once the learner has obtained from the verifier a new counter-example $(\hat A_{i+1}, \hat B_{i+1}) \in \Omega \setminus \textrm{conv}(\mc C_i)$, either it computes a new tentative solution pair $(P_{i+1},K_{i+1})$ considering a larger pool of samples, or Algorithm~\ref{algo:OverallCegis} terminates at the second step of the $(i+1)$-th iteration. If the latter happens, then Algorithm~\ref{algo:OverallCegis} returns infeasibility in a finite number of iterations, as claimed.
    On the other hand, what could happen instead is that the verifier may produce counter-examples indefinitely. %, thus invalidating both the statements.
    We will show in the following that this can 
 not be the case.

    In particular, we start by assuming that \eqref{eq:LMICon2} is feasible for all the considered iterations $i\in\N^+$, otherwise Algorithm~\ref{algo:OverallCegis} declares infeasibility in a finite number of steps. Then, by relying on Theorem~\ref{theo:LipSigma}, at the generic $i$-th iteration we can quantify how far from $\textrm{conv}(\mc C_i)$ the new counter-example $(\hat A_{i+1}, \hat B_{i+1})$ is. %will be.
    Precisely, it falls outside 
    % More precisely, at the generic $i$-th iteration we can resort on Theorem~\ref{theo:LipSigma} to define 
    the next convex set of matrices generating closed-loop eigenvalues %that are
    guaranteed to be at most $(\varepsilon/\eta^2)$-distant from those belonging to $\textrm{conv}(\mc C_i)$:
    % which hence amounts to a ``safe'' set where no counter-examples can be found:
    $$
    \begin{aligned}
        \mc S_{\bar A^{\textrm{cl}}_i} \coloneqq &\{A^{\textrm{cl}} \in \R^{n\times n}\mid A^{\textrm{cl}} = A+BK_i, A\in\R^{n\times n}, \\ &B\in\R^{n\times m}, ||\bar A^{\textrm{cl}}_i- A^{\textrm{cl}}||_\mathrm{op} \le \varepsilon^2/\eta^2, \\
        &\text{ for all } \bar A^{\textrm{cl}}_i = A_j+B_jK_i,~(A_j, B_j) \in \textrm{conv}(\mc C_i) \}.
    \end{aligned}
    $$

    % Armed with these considerations, we now show by contradiction that the verifier will not be able to produce any counter-example in a finite number of iterations.
    % a fact that directly depends on the behaviour of $\textrm{conv}(\mc C_i)$ that satisfies, as already observed, the inclusion $\textrm{conv}(\mc C_{i+1}) \supset \textrm{conv}(\mc C_i)$. 
    % Our goal is thus to show that $\Omega \setminus \mc S_{\bar A^{\textrm{cl}}_i} \to \emptyset$ as $i\to\infty$, which directly depends on the behaviour of $\textrm{conv}(\mc C_i)$. 
    % In particular, we note that in view of the properties of the convex hull \cite{kiselman2002semigroup}, since $\mc C_{i+1} \supset \mc C_i$, then also $\textrm{conv}(\mc C_{i+1}) \supset \textrm{conv}(\mc C_i)$.
    % , and hence $\textrm{cl}(\textrm{conv}(\mc C_{i+1})) \supset \textrm{cl}(\textrm{conv}(\mc C_i))$. 
    For the sake of contradiction, assume that the verifier generates an infinite sequence $\{(\hat A_i, \hat B_i)\}_{i\in\N^+}$, with $(\hat A_i, \hat B_i) \in \Omega$. From the first part of the proof, we know that $\mc C_{i+1} \supset \mc C_i$, and therefore the infinite, monotonically increasing sequence of sets $\{\mc C_i\}_{i\in\N^+}$ admits the 
    % convergent subsequence whose 
    limit set $\bar{\mc C} = \cup_{i\in\N^+} \mc C_i$, which enables us to extract from $\{(\hat A_i, \hat B_i)\}_{i\in\N^+}$ a subsequence of counter-examples converging to some $(\bar A, \bar B) \in \textrm{cl}(\textrm{conv}(\bar{\mc C})) \cap \Omega$. 
    % , is contained in $\Omega$ itself. 
    Since the learner iteratively generates tentative pairs $\{(P_i, K_i)\}_{i\in\N^+}$, produced by  %optimal
     solutions $\{(X^\star, W^\star)\}_{i\in\N^+}$ to \eqref{eq:LMICon2} living in nonempty compact sets, we can additionally extract a convergent subsequence from $\{(P_i, K_i)\}_{i\in\N^+}$ with limit pair $(\bar P, \bar K)$. Let us now consider a limiting ``safe'' set $\mc S_{\bar A^{\textrm{cl}}}$, defined as:
    $$
    \begin{aligned}
        &\mc S_{\bar A^{\textrm{cl}}} \coloneqq \{A^{\textrm{cl}} \in \R^{n\times n}\mid A^{\textrm{cl}} = A+B\bar K, A\in\R^{n\times n}, \\ &\hspace{1.3cm}B\in\R^{n\times m}, ||\bar A^{\textrm{cl}}- A^{\textrm{cl}}||_\mathrm{op} \le \varepsilon^2/\eta^2, \\
        &\hspace{1.3cm}\text{ for all } \bar A^{\textrm{cl}} = A_j+B_j \bar K,~(A_j, B_j) \in \textrm{conv}(\bar{\mc C}) \}.
    \end{aligned}
    $$
    Leveraging this limiting set, we can conclude that not all the points of the convergent subsequence extracted from $\{(\hat A_i, \hat B_i)\}_{i\in\N^+}$ are counter-examples. In fact, by definition of limit, there exists some large enough index $i_c$ associated to the underlying subsequence so that, by starting from $(\hat A_{i_{c+1}}, \hat B_{i_{c+1}})$, all the counter-examples generate closed-loop eigenvalues, obtained with the limit pair $(\bar P, \bar K)$, 
    strictly closer than $(\varepsilon/\eta^2)$ from those generated by $(\hat A_{i_{c}}, \hat B_{i_{c}})$ with the same limit pair $(\bar P, \bar K)$. This 
    contradicts the first part of the proof, i.e., the assumption that they are counter-examples. 
    % This clearly yields a contradiction.
    
    Hence, under the feasibility of \eqref{eq:LMICon2}, what can only happen is that the sequence $\{(\hat A_i, \hat B_i)\}$ converges to a pair $(\bar A, \bar B) \in \Omega$ in a finite number of iterations, say $N\in\N^+$, entailing that also $\{\mc C_i\}$ converges to a set $\mc C_N=\bar{\mc C}$ in $N$ steps, with $\bar{\mc C}$ so that $\Omega \setminus \mc S_{\bar A^{\textrm{cl}}} = \emptyset$. However, after $N$ iterations, the verifier can not %is not able to
    produce any counter-example by evaluating the tentative pair $(P_N, K_N)=(\bar P, \bar K)$ proposed by the learner. %Since the verifier can not generate counter-examples indefinitely,
    
    Concluding, one of the two outcomes of Algorithm~\ref{algo:OverallCegis} is returned in a finite number of iterations, ending the proof.
\end{proof}

\begin{remark}
% One can now see the fundamental difference between the proposed method and standard ray-shooting ones, as in~\cite{alessio2007squaring}. Whilst these latter build an outer approximation of $\Omega$ leveraging geometric considerations, our approach instead can set also for an inner approximation of $\Omega$ due to the additional presence of the $\mc S_{\bar A^{\textrm{cl}}_i}$ sets and still guarantee stability.
Compared to %standard
ray-shooting~\cite{alessio2007squaring}, which builds a conservative outer-approximation of $\Omega$ through geometric considerations, our approach covers the whole set $\Omega$ of possible counter-examples by exploiting samples contained in it, along with the ``safety'' guarantees provided by the sets $\mc S_{\bar A^{\textrm{cl}}_i}$. %, which follow from \eqref{eq:LMICon2} and \eqref{eq:smallestEig}.
\hfill$\square$
\end{remark}

\section{Computational aspects}\label{sec:comp_aspect}

%We now discuss few computational aspects of Algorithm~\ref{algo:OverallCegis}, including considerations on the tradeoff induced by the hyperparameters $\eta$ and $\varepsilon$ to both \eqref{eq:LMICon2} and \eqref{eq:smallestEig}. 

\subsection{Tradeoff induced by $\eta$ and $\varepsilon$}
\label{sec:rolevareps}
Besides strongly affecting the learner convex optimization problem \eqref{eq:LMICon2}, we note that  $\eta$ and $\varepsilon$, with $\eta\ge\varepsilon>0$, assume a key role also in characterizing the performance of both Algorithm~\ref{algo:OverallCegis} and the verification task \eqref{eq:smallestEig}. 
% Particularly $\varepsilon$, whose inverse coincides with the estimate of the Lipschitz constant characterizing the minimum eigenvalue of the matrix $\Xi(A_i^{\textrm{cl}})$, according to the proof of Theorem~\ref{theo:LipSigma}.

In fact, from Theorem~\ref{th:convergence}, specifically the lower bound \eqref{eq:lower_bound}, it is evident that setting $\eta$ as close as possible to $\varepsilon$, with this latter large enough, the convergence of Algorithm~\ref{algo:OverallCegis} may require a smaller number of steps, as this combination allows to ``erode'' (by means of the safe set $\mc S_{\bar A^{\textrm{cl}}_i}$) a larger portion of the set $\Omega\setminus \textrm{conv}(\mc C_i)$ at each iteration. In addition, this choice is also favourable to improve the performance of the technique employed to solve \eqref{eq:smallestEig}. Indeed, convergence of algorithms for Lipschitz-continuous global optimization problems, as the one in \eqref{eq:smallestEig}, is well-studied in the literature. As a general rule of thumb inferred, the smaller the Lipschitz constant, the better the guarantee on the performance of the underlying algorithm~\cite{malherbe2017global}. In view of Theorem~\ref{theo:LipSigma}, this hence suggests to choose larger values for the hypeparameter $\varepsilon$.
% , aiming at improving the performance of the adopted algorithm to solve the verifier's task \eqref{eq:smallestEig}. 

However, following the considerations above may undesirably complicate the learner's optimization problem, as the feasible set of \eqref{eq:LMICon2} would shrink, thus making the design of quadratic control Lyapunov functions and associated linear controllers a challenging task. This naturally induces a tradeoff in the choice of the pair $(\eta,\varepsilon)$, which shall hence be set according to the problem data at hand.

\subsection{Speeding up the verification process}
% Despite the fact that computing the smallest eigenvalue of a matrix is not an expensive procedure, it is undeniable that 
%The considerations below focus on the autonomous version of $\Sigma$ in \eqref{eq:Sigma}, i.e., $\mc B = \emptyset$, although they can immediately be generalized to the controlled case.
% The verification process shown in \S \ref{subsec:verifier} can undeniably be time consuming. 
To partially overcome the computational requirements of the verification process, one might want to rely on sensitivity-based methods.
% \ggnote{qui meglio riscrivere al solito che stiamo considerando per semplicità il solo caso in cui c'è $A$ e basta, ma che queste considerazioni valgono anche quando c'è $B$}. 
 Indeed, while in general the partial derivative 
$
\partial \lambda_i(\Xi(A^\textrm{cl}))/\partial A^\textrm{cl}_{(h,k)}
$ of the $i$-th eigenvalue $\lambda_i(\Xi(A^\textrm{cl}))$ with respect to the element  in position $(h,k)$ of $A^\textrm{cl}$ may be not defined when $\lambda_i$ changes multiplicity, as noted in~\cite{kato2013perturbation}, such an happenstance is quite uncommon. This fact suggests that exploiting such an information and trying to solve \eqref{eq:smallestEig} using a sensitivity-based optimization algorithm like an SQP or interior-point solver might be a valid heuristic to speed-up the overall verification process, by possibly accelerating the process of finding a new counter-example. In this case, however, we loose any kind of convergence guarantee, since finding a local minimum in \eqref{eq:smallestEig} may still lead to a counter-example, but if no counter-example can be found by a local solver, then nothing can be concluded. 
% and a global optimizer is needed. %\abnote{[Better explain here. I guess due to local minima? I understand that a local minimum may still lead to a counter-example, but if no counter-example can be found by a local solver, then nothing can be concluded and a global optimizer is needed]}.
Nevertheless, a sensitivity-based approach may lead to a substantial speed improvement in the evaluation of \eqref{eq:smallestEig}. This motivates the next modification of the verifier's task in Algorithm~\ref{algo:OverallCegis}. Given some number of trials $N_t\in\N^+$, { and denoting by $\hat{\lambda}^{\star}$ an estimate of $\lambda^{\star}$ (which is also an overestimate, being \eqref{eq:smallestEig} a minimization problem)}:

% $\texttt{counter}=0$ and initial conditions for the sensitivity-based algorithm: 1) Solve \eqref{eq:smallestEig} using a sensitivity-based method to obtain $\lambdamin^{\star,  \textrm{sens}}$; 2) If $\lambdamin^{\star,  \textrm{sens}} <0$ then $\lambdamin^\star \leftarrow \lambdamin^{\star,  \textrm{sens}}$, otherwise reinitialize the algorithm with different initial condition and increase $\texttt{counter}$.;
% 3) if $\texttt{counter}==N_t$ then solve \eqref{eq:smallestEig} as in Algorithm~\ref{algo:OverallCegis}.

\smallskip
\noindent \textbf{Set} $\texttt{counter}=0$\\
\noindent\textbf{While} $\texttt{counter}<N_t$
\begin{itemize}
\item[$\circ$] Solve \eqref{eq:smallestEig} by a sensitivity-based method to get $\lambda^{\star,  \textrm{sens}}$\\
    \smallskip
    \textbf{If} $\lambda^{\star,  \textrm{sens}} <0$ \textbf{then} $\hat{\lambda}^\star \leftarrow \lambda^{\star,  \textrm{sens}}$, \texttt{exit while}\\
    \textbf{Else} $\texttt{counter} \leftarrow \texttt{counter}+1$, set different initial conditions for sensitivity-based method
\end{itemize}
\noindent\textbf{End}\\
\noindent\textbf{If} $\texttt{counter}==N_t$ \textbf{then} Solve \eqref{eq:smallestEig} as in Algorithm~\ref{algo:OverallCegis}
\smallskip

The same intuition also holds for the \textit{refinement} of the { approximate} minimizer found by the global optimization procedure, as it is well-known that the \textit{quality} of the solution provided by most global optimization tools is often subpar. 
%(in the sense that such algorithms are good in finding approximately an optimal solution if low accuracy is needed, but bad in finding it approximately if, instead, high accuracy is required)
%a warm-started sensitivity method can be used to refine the already found solution.
%{
% \begin{remark}
%Note that, in the only possible iteration in which no counter-example is generated (i.e., when $\hat{\lambda}^{\star} \geq 0$), the output of the verifier task may be verified independently based on a lower bound on the optimal value of its objective function (obtained from, e.g., the associated dual optimization problem).
% \hfill$\square$
% \end{remark}
%} 
% \vspace{-0.2cm}

{
\begin{remark} 
We remark that numerical issues %related to finite precision/termination of an optimization algorithm before convergence
may produce only a rough estimate $\hat{\lambda}^{\star}$ of $\lambda^{\star}$. When $\hat{\lambda}^{\star}<0$, this would be not harmful, because the verifier would still be able to generate a counter-example. However, for $\hat{\lambda}^{\star} \geq 0$, a non-valid Lyapunov function could be generated (as in \cite[Example~8]{AhmedPeruffoAbate2020}), in case $\lambda^{\star}<0$. To avoid this, one may replace the condition $\lambda^{\star} \geq 0$ in  Algorithm~\ref{algo:OverallCegis} with, e.g., $\lambda^{\star} \geq \frac{\varepsilon}{2}$, and $\frac{\varepsilon}{2}$ larger than the desired precision. In this case, the main change in the proof of Theorem \ref{th:convergence} would be the generation of smaller safe sets.
\hfill$\square$
\end{remark}
}

%\vspace{-0.2cm}

%\subsubsection{On polytopic approximations}

%The verifier task requires the knowledge of the set $\Omega$ characterizing the generator pairs of matrices determining the dynamical evolution of $\Sigma$. Given that availability, one may na\"ively be induced to solve the learner task in one shoot by, e.g., considering a convex-hull-based approximation of $\Omega$ and hence implementing an LMI in the spirit of \eqref{eq:baseLineK}. We note that, however, this choice is typically overly conservative for sets $\Omega$ with generic shape, as it leads to some outer approximation. In addition, it is well known that the number of vertices of the convex hull grows with both the number of generator matrices populating $\Omega$ and the dimension we are considering, i.e., $n(n+m)$ \cite{raynaud1970enveloppe}, since $\mc A \subseteq \R^{n\times n}$ and $\mc B \subseteq \R^{n\times m}$.
%As an undesired consequence, since \eqref{eq:baseLineK} is based on the vertex pairs of matrices $(A_j,B_j)$, one can easily incur in a prohibitive LMI to solve numerically, thus making such an approximation-based approach not viable from a computational perspective.

%With the proposed approach, instead, the learner first tries to solve tractable optimization problems with LMI constraints with moderate (or even small) number of vertices dictated at most by the cardinality of $\mc C_i$,
%whereas the verifier has only to compare the minimum eigenvalues of symmetric matrices \eqref{eq:Schur} once the candidates $P_i$ and $K_i$ have been disclosed by the learner.

\section{Numerical examples}
\label{sec:numerical}

We verified the effectiveness of the proposed approach on two numerical examples.
All experiments were carried out on a PC with an Intel core i7-1165G7, Ubuntu Linux and MATLAB R2022a. Solution of the optimization problem \eqref{eq:LMICon2} was obtained by %using
YALMIP~\cite{Lofberg2004YALMIP}, while the vertices of polytopes were computed using %the
MPT3~\cite{Herceg2013MPT3} and 
YALMIP~\cite{Lofberg2004YALMIP} %\abnote{I suspect that CDD by K. Fukuda is used for vertex enumeration and that MPT/YALMIP are just a MATLAB frontend to it}.
As a Lipschitz global optimization solver, we adopted the \texttt{dDirect\_GLce} algorithm from the DIRECTGO toolbox~\cite{stripinis2022directgo}.
All the results of the CEGIS scheme were obtained in a few tens of seconds.

\subsection{Polytopic uncertainty set}
We first consider an uncertain system $\Sigma$ in \eqref{eq:Sigma} with four states and one input, where the matrix $B$ is not uncertain and $B(k)= B = \left[0 \; 0 \; 0 \; 1\right]^\top$, while { each entry of} the dynamical matrix $A$ is 
% \begin{align}
%     \Sigma_{\mathrm{poly}}: x_{k+1}= A_{\mathrm{poly}} +B_{\mathrm{poly}} u_k,
% \end{align}
% where 
% \begin{align}
%     B_{\mathrm{poly}}=\begin{bmatrix}
%    0&0&0&1
%     \end{bmatrix}',
% \end{align}
% and $A_{\mathrm{poly}}$ is 
subject { -- independently from the other entries --} to an interval uncertainty, % namely each entry $a_ij$ of $A_{\mathrm{poly}}$ is 
namely
$$
\begin{aligned}
    A(k) \geq &\left[\begin{smallmatrix}
   -0.6685&-0.8709&-0.2028&-1.5547\\
\phantom{-}1.1457&-0.5898&\phantom{-}0.5688&\phantom{-}0.8496\\
-0.7812&-0.5754&-0.8774&-0.2501\\
-1.1429&\phantom{-}0.1730&\phantom{-}0.7763&\phantom{-}0.1618\\
    \end{smallmatrix}\right]
,  \\
    %&\hspace{3.5cm}
    A(k)\leq& \left[\begin{smallmatrix}
-0.6295&-0.8202&-0.1910&-1.4641\\
\phantom{-}1.2166&-0.5555&\phantom{-}0.6040&\phantom{-}0.9022\\
-0.7357&-0.5419&-0.8263&-0.2355\\
-1.0763&\phantom{-}0.1837&\phantom{-}0.8243&\phantom{-}0.1718\\
   \end{smallmatrix}\right],
\end{aligned}
$$
for all $k\in\N$. Although such { an uncertain system} could be reliably stabilized using %standard
polytopic techniques such as the one presented in~\cite{kothare1996robust}, this would require %one
to solve an optimization problem with %at least
$2^{16}=65536$ LMI constraints, { each generated by taking independently} either the lower or upper bound for each of the 16 entries of the matrix $A$. While theoretically possible, this would be %terribly
{ hardly doable even with} modern hardware. %likely surpass the capabilities of modern hardware.

By using the proposed CEGIS method, instead, we %are able to 
found a quadratic control Lyapunov function with $P$ matrix:
$$
\bar P=\left[\begin{smallmatrix}
  162.4930 &  29.5126 &  82.0931 & 176.8625\\
   29.5126  & 42.9150 &  21.4642 &  50.7574\\
   82.0931  & 21.4642 &  58.6050 &  99.6742\\
  176.8625  & 50.7574 &  99.6742 & 232.9383\\
    \end{smallmatrix}\right]
$$
accompanied with linear feedback policy:
$
    \bar K=   \begin{bmatrix}
    1.7667 &   0.9014  & -0.3555    &1.0089
     \end{bmatrix}.
$
{ The soundness of the result was verified a-posteriori by checking the positive definiteness of $\Xi(A_h + B \bar K)$, built from the obtained $\bar P$, on each of the $65536$ pair of vertex matrices $(A_h,B)$ defining $\Omega$}. %Remarkably,
Such a result was obtained with a final constraint set $\mc C_i$ made of five samples (the initial point and four counter-examples). {
% \begin{remark}
 On the other hand, on our reference hardware solving the original LMI feasibility problem \eqref{eq:baseLineK} was not possible due to limitations on the available computational power. %as the SDP solver we relied on could not complete the process in a 10 minutes frame.
 In contrast, the proposed method was able to synthesize a quadratic control Lyapunov function in a few tens of seconds. 
 % \hfill$\square$
% \end{remark}
}
%\vspace{-0.4cm}
{
%\begin{remark}
%As a limitation of the proposed method, t

We note that the results shown above were obtained excluding from the set of Lyapunov candidates all those functions not satisfying the constraints dictated by our choice of $\eta$ and $\varepsilon$ in~\eqref{eq:LMICon2}. While, in our experience, tuning said quantities was not problematic, an unreasonable choice of those hyperparameters (e.g., $\eta=2 \cdot 10^3$ and $\varepsilon=10^3$) may indeed cause the proposed scheme to fail. In that case, one can always restart the procedure with more permissive values for $\eta$ and $\varepsilon$. % and, possibly,
%A different option would be only to establish \textit{bounds} on $\eta$ and $\varepsilon$ and letting them be optimized by the SDP solver itself.
Finally, when a quadratic control Lyapunov function is found, it may be characterized by a smaller $\varepsilon$ than in the verifier task. 
% \hfill$\square$
%\end{remark}
}
%, namely Algorithm~\ref{algo:OverallCegis} returned a solution in five iterations.

% \begin{remark}
% The soundness of~\eqref{eq:robustPoly} was tested verifying the positive definiteness of \eqref{eq:closedLoopCondition} on each one of the $65536$ vertexes of $\Omega$.
% \end{remark}

\vspace{-0.4cm}

\subsection{Spherical uncertainty set}
We consider now the case in which the matrix $A(k)$ is subject to spherical uncertainty, i.e.,
$
    \Omega = \{ A \in \R^{2\times 2} \mid (\mathrm{vec}(A)-c)^\top Q (\mathrm{vec}(A)-c)\ -1 \leq 0 \},
$
where the operator $\textrm{vec}(\cdot)$ stacks the columns of its argument into a single column vector, $Q=5 I$, while $B(k)= B = \left[0 \; 1\right]^\top$.
In particular, $c=\mathrm{vec}(A_\mathrm{centroid})$, where 
$
A_\mathrm{centroid}= \left[\begin{smallmatrix}
       \phantom{-} 0.6458   & 0.3852\\
   -1.4651   & 1.1183
\end{smallmatrix}\right]$. %{\color{red} For $\eta=\ldots$, $\varepsilon=\ldots$},
The synthesis of a control Lyapunov function was obtained with a counter-example set including four points, i.e., Algorithm~\ref{algo:OverallCegis} converged in three iterations, %(including the centroid $c$, also adopted for the initialization of Algorithm~\ref{algo:OverallCegis})).
yielding 
$
\bar P = \left[\begin{smallmatrix}
  117.4770 &  60.7593\\
   60.7593 & 130.6819
\end{smallmatrix}\right]$ 
% $$ 
% with corresponding gain matrix
and 
$\bar K=\begin{bmatrix}
    0.9280 &  -1.4962    
\end{bmatrix}$. 
{ We verified the soundness of these results by using a convex polytopic outer-approximation of the spherical set $\Omega$, built using the method in~\cite{yalmip2016sampleBased}}. Specifically, the polytope was built using 1000 randomly generated rays, %. The resulting polytope %(which one might have used to outerly approximate $\Omega$ and use a classical polytopic approach)
yielding 6592 vertices. This fact reinforces the findings of the previous example regarding the efficiency of our approach in comparison to purely geometric set approximation alternatives. 
Moreover, the technique in \cite{yalmip2016sampleBased} unavoidably introduces some degree of conservatism, which is avoided using our method. %(in the considered example, one of the vertices of the approximating polytope violated the constraint by over $1$)
While a spherical $\Omega$ theoretically has an infinite number of vertices, Theorem~\ref{th:convergence} guarantees that, in view of the %compactness of $\Omega$ and
assumption 
$\eta \geq \varepsilon>0$, to find a quadratic control Lyapunov function and associated linear controller is sufficient to examine only a finite number of vertices. %{\color{red} Riportare quanto segue per il primo esempio, e fare un confronto anche con un altro metodo classico, mostrando che non funge.} { This comes at the cost of excluding from the set of candidate quadratic control Lyapunov functions those that do not satisfy the constraints of the learner task \eqref{eq:LMICon2}, because of a too large choice for $\eta$ or $\varepsilon$}. {\color{red} As one extreme case, for $\eta=...$, $\varepsilon=...$, the proposed CEGIS scheme terminates by declaring infeasibility of the learner task, returning no quadratic control Lyapunov function. This highlights the importance of a proper tuning of the hyperparameters of the proposed method, in order to make its application effective.} %\abnote{Andrebbe forse commentato che pur avendo il set $\Omega$ sferico un numero infinito di vertici, il teorema 2 comunque garantisce che un numero finito di vertici \`e sufficiente. Immagino per via del fatto che $\epsilon>0$?}

%\abnote{[Nei due esempi $\Omega$ \`e convesso. \`E possibile aggiungere un esempio in cui $\Omega$ non \`e convesso facendo vedere come l'overapproximation data dal convex hull del set, che \`e quello che uno farebbe normalmente, e' molto piu' conservativa rispetto a CEGIS? Altrimenti, l'approccio si reduce ad un metodo per generare in maniera incrementale i vertici del convex-hull di $\Omega$ necessari a risolvere l'LMI robusta.]} %\ggnote{In effetti il metodo si potrebbe riassumere così: si risolve un problema più semplice in cui si considera solo un sottoinsieme di LMIs, si cerca di verificare se la soluzione trovata soddisfa anche tutte le altre LMIs, e se non è così si aggiunge al sottoinsieme di LMIs l'LMI che al momento è violata maggiormente.}

% \subsection{Experimental setup and computational times}

\section{Conclusion}
\label{sec:conclusions}
We have introduced a CEGIS-based method to design control Lyapunov functions, accompanied by suitable controllers, for linear systems with parametric uncertainties %living
in an arbitrary nonempty compact set.
% of dynamical systems subject to polytopic parametric uncertainty. 
%The proposed CEGIS-based approach is non-conservative and allows one to practically extend standard results on robust control.
By building upon %available
results in linear operator theory, our technique features appealing Lipschitz-continuity properties that enable the use of efficient solvers for global optimization and guarantee finite-time convergence. 
% Several numerical considerations have also been been given. 
% We foresee several extension for the proposed method to other classes of dynamical systems and to include additional objectives into the synthesis part. Ideas regarding the design of an especially crafted global Lipschitz  optimization method %especially
% geared on the proposed verification process are also being considered.
Future work will focus on { improving the rate of convergence through enlarging the ``safe'' sets, as well as  estimating the number of iterations needed, on the extension of our to other type of uncertainties, and on its further comparison with available techniques (e.g., those based on SOS methods \cite{WuPrajna2004})}.

\bibliographystyle{IEEEtran}
\bibliography{biblio}

\end{document}